\newcolumntype{+}{>{\global \let \currentrowstyle \relax}}
\newcolumntype{^}{>{\currentrowstyle }}
\newtheorem{thm}{Theorem}[section]
\newtheorem{rem}[thm]{Remark}
\newenvironment{proof}{%
{\noindent \bf Proof. }%
}{%              
\hfill$\Box$\\%
}
\def\R{\mathbb{R}}
\def \e{{\rm e}}
\def \G {\mathcal{G}}
\def \K {\mathcal{K}}
\def \x {{\bf x}}
\begin{document}

\begin{frontmatter}

%% Title, authors and addresses

%% use the tnoteref command within \title for footnotes;
%% use the tnotetext command for the associated footnote;
%% use the fnref command within \author or \address for footnotes;
%% use the fntext command for the associated footnote;
%% use the corref command within \author for corresponding author footnotes;
%% use the cortext command for the associated footnote;
%% use the ead command for the email address,
%% and the form \ead[url] for the home page:
%%
%% \title{Title\tnoteref{label1}}
%% \tnotetext[label1]{}
%% \author{Name\corref{cor1}\fnref{label2}}
%% \ead{email address}
%% \ead[url]{home page}
%% \fntext[label2]{}
%% \cortext[cor1]{}
%% \address{Address\fnref{label3}}
%% \fntext[label3]{}

\title{
A  reaction-diffusion system to better comprehend the unlockdown: Application  of  SEIR-type model with diffusion to the spatial spread of COVID-19 in France}

%% use optional labels to link authors explicitly to addresses:
%% \author[label1,label2]{<author name>}
%% \address[label1]{<address>}
%% \address[label2]{<address>}

\author[lamfa]{Youcef Mammeri\corref{cor1}}
\address[lamfa]{Laboratoire Ami\'enois de Math\'ematique Fondamentale et Appliqu\'ee, CNRS UMR 7352, Universit\'e de Picardie Jules Verne, 80069 Amiens, France}

\cortext[cor1]{Corresponding author: youcef.mammeri@u-picardie.fr}

\begin{abstract}
A reaction-diffusion model was developed describing the spread of the COVID-19 virus  considering the mean daily movement of susceptible, exposed and asymptomatic individuals. The model was calibrated using data on the confirmed infection and death from France as well as their initial spatial  distribution. First, the system of partial differential equations is studied, then the basic reproduction number, $\mathcal{R}_0$ is derived. 
Second, numerical simulations, based on a combination of level-set and finite differences, shown the spatial spread of COVID-19 from March 16 to June 16. Finally,  scenarios of unlockdown are compared according to variation of distancing, or partially spatial lockdown.
\end{abstract}

\begin{keyword}
    COVID-19 \sep Reaction-diffusion \sep SE$I_s$$I_a$UR model \sep Reproduction number s\sep Unlockdown map
%% keywords here, in the form: keyword \sep keyword

%% MSC codes here, in the form: \MSC code \sep code
%% or \MSC[2008] code \sep code (2000 is the default)
\MSC[2010] 92D30\sep 37N25 \sep 35K51 \sep 35Q92
\end{keyword}

\end{frontmatter}

\section{Introduction}
In late 2019, a disease outbreak emerged in the city of Wuhan, China.  The culprit was a certain strain called  Coronavirus Disease 2019 or COVID-19 in brief \citep{WHO2020a}.  This virus has been identified to cause fever, cough, shortness of breath, muscle ache, confusion, headache, sore throat, rhinorrhoea, chest pain, and nausea \citep{Hui2020, Chen2020}. COVID-19 belongs to the \textit{Coronaviridae} family. A family of coronaviruses that cause diseases in humans and animals, ranging from the common cold to severe diseases.  Although only seven coronaviruses are known to cause disease in humans,  three of these,  COVID-19 included, can cause severe infection, and sometimes fatal to humans. 

COVID-19 spreads fast. According to WHO \citep{WHO2020b}, it only took  67 days from the beginning of the outbreak in China last December 2019 for the virus to infect the first 100,000 people worldwide.
As of the 5th of May 2020, a cumulative total of 3,601,760 confirmed cases, while 251,910 deaths have been recorded for COVID-19 by World Health Organization \citep{WHO2020c}.

Last 30th of January, WHO characterized COVID-19 as Public Health Emergency of International Concern (PHEI) and urge countries to put in place strong measures to detect disease early, isolate and treat cases, trace contacts, and promote social distancing measures commensurate with the risk \citep{WHO2020d}. In response, the world implemented its actions to reduce the spread of the virus. Limitations on mobility, social distancing, and self-quarantine have been applied.  More than 100 countries established full or partial lockdown.   All these efforts have been made to reduce the transmission rate of the virus. For the time being,  COVID-19 infection is still on the rise. 

Many mathematical models have been proposed to help governments as an early warning device about the size of the outbreak, how quickly it will spread, and how effective control measures may be. Most of the model are  (discrete or continuous) SIR-type and few  are taken into account the spatial spread.

\cite{Gardner2020}  implemented a metapopulation network model described  by a  discrete-time Susceptible-Exposed-Infected-Recovered (SEIR) compartmental model. The model gives an estimate of the expected number of cases in mainland China at the end of January 2020, as well as the global distribution of infected travelers. 
\cite{Wu2020} fused an SEIR metapopulation model to simulate epidemic.  \cite{leon20}  incorporated daily movements in an SEIR model, while  \cite{giu20} proposed a statistical model to handle the diffusion of covid-19 in Italy.

Here, spatial propagation is translated by a diffusion and the reaction terms are deduced from an extension from the classical SEIR model by adding a compartment of asymptomatic infected \citep{arc20}.  We aim at predicting the spread of COVID-19 by giving maps the basic reproductive numbers $\mathcal{R}_0$ and its effective reproductive number $\mathcal{R}_{\rm eff}$.  Afterward, we also assess possible scenarios of unlockdown.

The rest of the paper is organized as follows. Section 2, outlines our methodology. Here the model was explained, where the data was taken, and its parameter estimates.  Section 3 contained the qualitative analysis for the model. Here, we provide the reproductive number $\mathcal{R}_0$, then compare strategies to handle unlockdown.  Finally, section 4 outlines our brief discussion on some measures to limit the outbreak.

\section{Materials and methods}

\subsection{Confirmed and death data}
In this study, we used the publicly available dataset of COVID-19 provided by the Sant\'e Publique France.
This dataset includes daily count of confirmed infected cases, recovered cases, hospitalizations and deaths. Data can be downloaded from 
{\footnotesize \url{https://www.data.gouv.fr/fr/datasets/donnees-hospitalieres-relatives-a-lepidemie-de-covid-19/}.}  
These data are collected by the National Health Agency and are directly reported public and unidentified patient data, 
so ethical approval is not required.
The map of population density are from G\'eoportail  ({\footnotesize \url{https://www.geoportail.gouv.fr}}) established by the National Geographic Institute. Data concerning transport are extracted from National Institute of Statistics and Economic Studies 
({\footnotesize \url{ https://statistiques-locales.insee.fr/}}).

\subsection{Mathematical model}

\begin{figure}[htbp]
\centering
\includegraphics[scale=0.5]{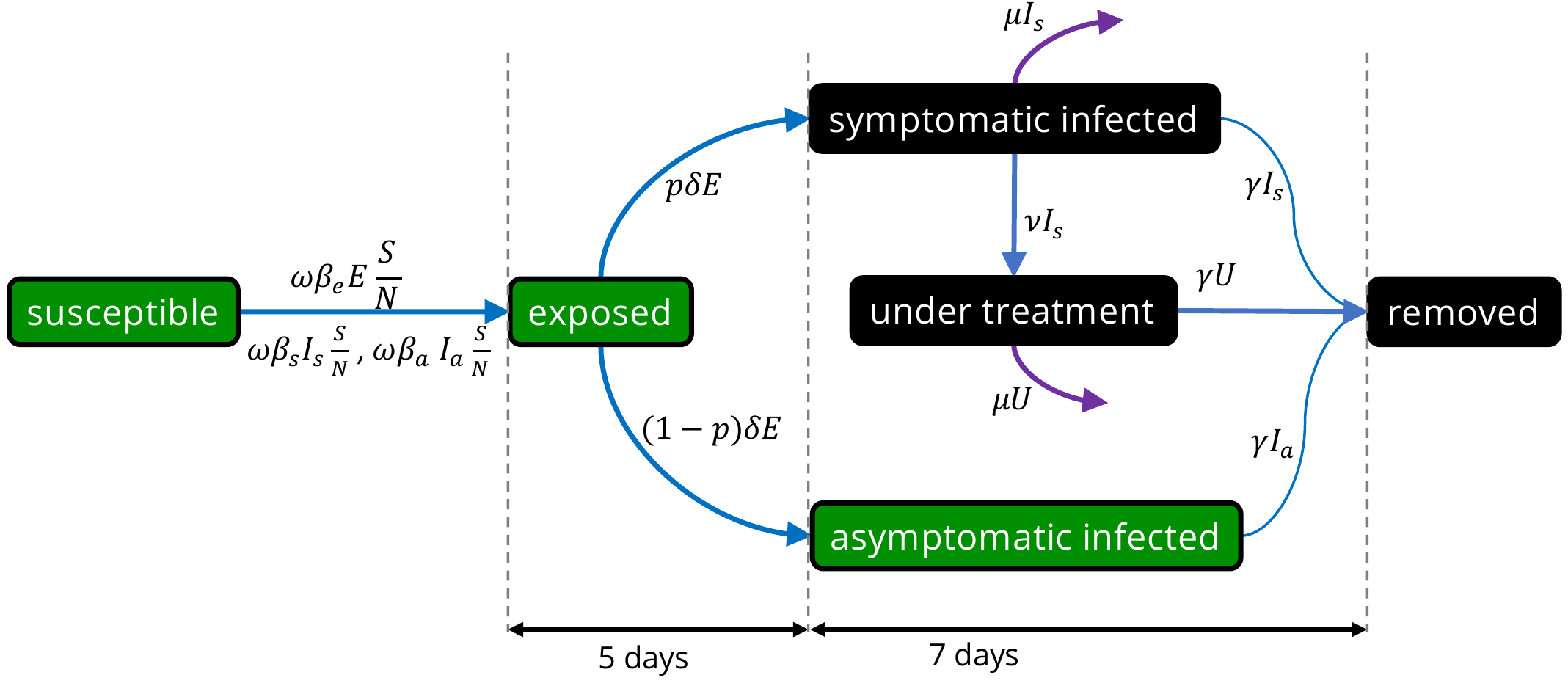} 
\\[-0.5cm]\caption{Compartmental representation of the $SEI_aI_sUR-$model. 
Blue arrows represent the infection flow. Purple arrows denote the death. Green compartments indicate moving individuals.}
\label{figflow}
\end{figure}

We focus our study on six components of the epidemic flow (Figure \ref{figflow}), {\it i.e.} the densities of Susceptible individual ($S$),
Exposed individual ($E$), symptomatic Infected individual ($I_s$), asymptomatic Infected individual ($I_a$), Under treatment
 individual ($U$) and Removed individual ($R$). To simplify the readings, treatments are not distinguished 
 between quarantine, hospitalization or medicine. 
To build the mathematical model, we followed the standard strategy developed in the literature concerning SIR model \citep{odi2000,bra2012,arc20}. We assumed that susceptible can be infected by exposed and by infected individuals. 
We suppose that only susceptible, exposed and asymptotic individuals are moving. 
The dynamics is governed by a system of three partial differential equations (PDE)
and three ordinary differential equations (ODE) as follows, for $\x=(x,y) \in \Omega \subset \R^2$, $t>0$,

\begin{equation}
  \label{e1}
  \left\{
      \begin{aligned}
     \partial_t S - d(t) \Delta S &= -   \omega(t)\left( \beta_e E + \beta_s I_s + \beta_a I_a \right) \frac{S}{N} 
    \\
\partial_t E - d(t) \Delta E &=     \omega(t)\left( \beta_e E + \beta_s I_s + \beta_a I_a \right) \frac{S}{N} - \delta E 
    \\
\partial_t I_a - d(t) \Delta I_a &= (1-p) \delta E - \gamma I_a
  \\
  I_s' &= p \delta E - (\gamma + \mu + \nu) I_s
  \\
  U' &= \nu I_s  - (\gamma + \mu) U 
    \\
  R' &= \gamma (I_a + I_s + U).   
      \end{aligned}
    \right.
\end{equation}
Frontiers being now closed, homogeneous Neumann boundary conditions is imposed.
The total living population  is $ N =  S + E + I_a + I_s + U + R$
and the death is $D = \mu (I_s+  U )$ 
No new recruit is added. 
The parameters are described in Table A of Figure \ref{fig1}.

\subsection{Parameters estimation}

Latency period and infection period have been estimated  as 5 days and 7 days respectively \citep{lau20}, and thus
$\delta=1/5, \gamma=1/7$. To account for the lockdown and unlockdown, the average number of contacts
is updated as follows \citep{liu20}
\begin{equation}
  \label{e2}
  \omega(t) = \left\{
      \begin{aligned}
        \omega_0 & \mbox{ if } t \leq t_{bol} \\
        \omega_0 \e^{-\rho (t- t_{bol})}& \mbox{ if } t_{bol} \leq t \leq t_{eol}    \\
              \frac{(1-\eta) \omega_0}{1+ ((1-\eta)\e^{\rho (t_{eol}- t_{bol})}-1) \e^{-2\rho (t- t_{eol})} } & \mbox{ if }  t \geq t_{eol},             
              \end{aligned}
    \right.
\end{equation}
while the diffusion coefficient is set up to
\begin{equation}
  \label{e3}
  d(t) = \left\{
      \begin{aligned}
        d_0 & \mbox{ if } t \leq t_{bol} \\
        d_0 \e^{-\rho (t- t_{bol})} & \mbox{ if } t_{bol} \leq t \leq t_{eol}    \\
      \frac{d_0}{1+ (\e^{\rho (t_{eol}- t_{bol})}-1) \e^{-2\rho (t- t_{eol})} } & \mbox{ if }  t \geq t_{eol}.            
         \end{aligned}
    \right.
\end{equation}
Here $bol$ denotes for beginning of lockdown and $eol$ for end of lockdown. Unlockdown is assumed to be faster
than lockdown. The parameter $0\leq \eta \leq 1$ is a varying coefficient translating
respect for distancing.
From INSEE, the daily commuting in France is around $25 km$, the value of $d_0$ is fixed equal to $\frac{25^2}{16}$ \citep{O80, SK97}.
Six parameters $\theta = (\rho,\beta_e,\beta_s,\beta_a,p,\mu)$ remain to be determined. Given, for $N$ days, the observations 
$I_{s, obs}(t_i)$ and $D_{obs}(t_i)$, the cost function consists of the nonlinear least square function
$$
J(\theta) = \sum_{i=1}^N  \left(I_{s, obs}(t_i) - \overline{I_s}(t_i,\theta) \right)^2 +  ( D_{obs}(t_i) - \overline{D}(t_i,\theta))^2,
$$
with constraints $\theta \geq 0$.
Here $\overline{I_s}(t_i,\theta) = \int_\Omega I_s(\x, t_i,\theta)d\x$ and $\overline{D}(t_i,\theta) = \int_\Omega D(\x, t_i,\theta)d\x$ denote the output of the mathematical model at time $t_i$ computed with the parameters $\theta$.  The optimization problem is solved using Approximate Bayesian Computation combined with a quasi-Newton method \citep{Csi2010}.

\begin{landscape}
\begin{figure}[htbp]
\centering
 \includegraphics[scale=.6]{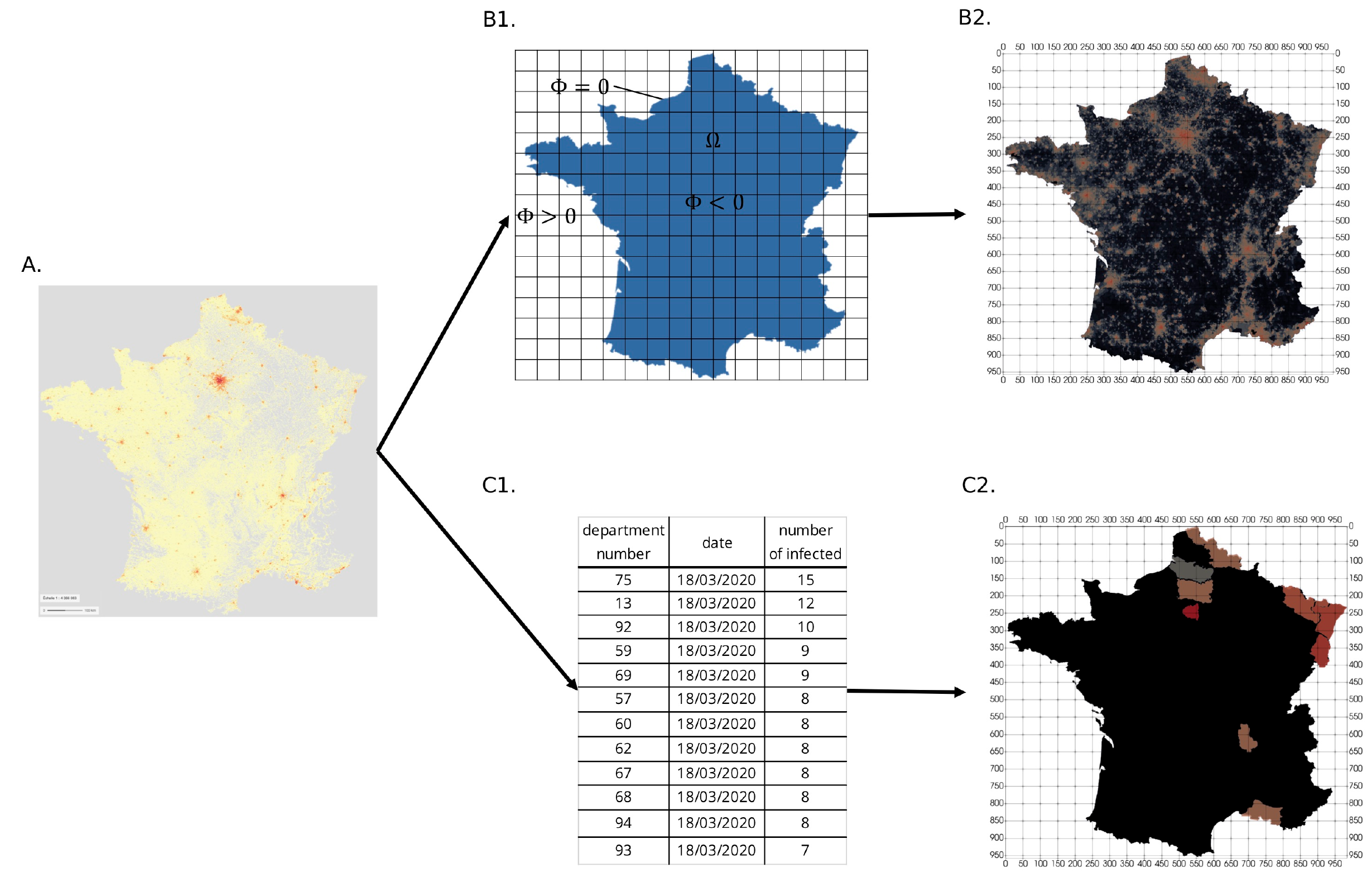}
 \\[-0.5cm]\caption{A. Map of population density from G\'eoportail.  B1. Level-set and  computation domain consisting in the cartesian grid. 
 B2. Initial population density  $N_0(\x)$. C1. Confirmed infecteds with respect to department number provided by Sant\'e Publique France. 
 C2. Initial infection density  $I_{s,0}(\x)$.}
\label{fig0}
\end{figure}

\end{landscape}

\subsection{Numerical discretization}

From the map of the country, the level-set  \citep{OF02, S99} is defined 
as the function $\phi$ such that the  territory 
$$  \Omega := \left\{ \x \in \R^2; \; \phi(\x) < 0 \right\}, $$ and its boundary is the zero level of $\phi$ 
The exterior normal is
$ \vec{n} = \frac{\nabla \phi}{||\nabla \phi ||}.$
The computation domain consists in a cartesian grid given by the image pixels
(Figure \ref{fig0}).
Then the map of population density allows to build initial population density  $N_0(\x)$.
From data of confirmed infecteds, the spatial distribution  of  initial  infection  $I_{s,0}(\x)$ with respect to department number
is created (Figure \ref{fig0}).

Finally, Runge-Kutta 4 is used for the time discretization and central finite difference for the space discretization.

\section{Results}

\subsection{Global well-posedness and basic reproduction numbers}

Let us suppose that for all time, $d_1 \leq d(t) \leq d_0$.
We first prove that the model is globally well posed.
\begin{thm}\label{thm1}
  Let $0 \leq S_0, E_0, I_{a,0}, I_{s,0}, U_0, R_0  \leq N_0$ be the initial datum. 
Then there exists a unique global in time weak solution 
$(S, E, I_{a}, I_{s}, U, R) \in L^\infty(\R_+,L^\infty(\Omega))^6,$ 
of the initial boundary value problem. Moreover, the solution is nonnegative and 
$S + E+ I_{a}+ I_{s}+ U+ R \leq N_0$.
\end{thm}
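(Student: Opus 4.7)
My plan is to prove Theorem \ref{thm1} by the classical two-step scheme for semilinear parabolic systems: local well-posedness through a Banach fixed point, then global existence through a priori $L^\infty$ bounds.

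First I would recast the problem in mild (Duhamel) form. Because $d$ depends only on $t$ and satisfies $d_1 \le d(t) \le d_0$, the family $-d(t)\Delta$ with homogeneous Neumann boundary data generates a two-parameter evolution family $\{T_d(t,s)\}_{0\le s\le t}$ on $L^\infty(\Omega)$, satisfying standard heat-kernel bounds. The three diffusive equations then read
$$
u_i(t) = T_d(t,0) u_i(0) + \int_0^t T_d(t,s)\, F_i\bigl(u(s)\bigr)\, ds, \qquad i\in\{S,E,I_a\},
$$
while the three ODEs become $u_j(t)=u_j(0)+\int_0^t F_j(u(s))\,ds$, so the whole system is a fixed point for a map $\Phi$ on $C([0,T^*];L^\infty(\Omega))^6$. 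The reaction nonlinearities are locally Lipschitz in $L^\infty$ on any closed subset on which $N$ stays bounded below, so a standard contraction argument on a small interval delivers a unique local mild solution.

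Next I would verify the qualitative properties. Non-negativity follows from the quasi-positive structure of the reaction vector: on the boundary of the positive cone each component's right-hand side has a non-negative sign, which I would make rigorous by testing every component against its own negative part and invoking the parabolic (resp.\ ODE) maximum principle. For the uniform bound, summing the three parabolic equations eliminates the infection term and gives
$$
\partial_t(S+E+I_a) - d(t)\Delta(S+E+I_a) = -p\delta E - \gamma I_a \le 0,
$$
so Neumann comparison yields $\|S+E+I_a\|_{L^\infty(\Omega)}\le\|S_0+E_0+I_{a,0}\|_{L^\infty(\Omega)}$. The ODE block is then controlled through Gronwall applied to its integral representation, since its sources are linear in the already-bounded variables; summing all six equations and combining the two bounds delivers $S+E+I_a+I_s+U+R\le N_0$ in the relevant supremum sense and rules out finite-time blow-up, so a continuation argument extends the solution to all of $\R_+$.

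The step I expect to be most delicate is the singular factor $S/N$. On the invariant cone of non-negative states with $S\le N$ one has $S/N\in[0,1]$ so the infection term is bounded, but the Lipschitz estimates for the fixed-point argument demand a uniform lower bound on $N$. I would sidestep this by regularising, replacing $S/N$ by $S/(N+\varepsilon)$, solving the regularised problem by the scheme above, establishing the non-negativity and $L^\infty$ bounds uniformly in $\varepsilon$, and then passing $\varepsilon\to 0$ using parabolic compactness for $(S^\varepsilon,E^\varepsilon,I_a^\varepsilon)$ together with Arzel\`a--Ascoli on the ODE block. Uniqueness of the limit inside the cone then follows from a direct Gronwall comparison between two solutions, again using $S/N\le 1$ to control the nonlinear ratio.
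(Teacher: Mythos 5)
Your overall route is the same as the paper's: a Duhamel/mild formulation, a contraction mapping on a ball of $L^\infty$-valued functions, quasi-positivity of the reaction terms for nonnegativity, and a comparison/maximum-principle argument for the uniform bound, followed by continuation in time. Two of your refinements in fact repair points the paper glosses over: you treat the time-dependent diffusion $d(t)$ through a genuine evolution family, whereas the paper works with kernels for the frozen operators with $d=d_0$ or $d_1$; and you regularize the ratio as $S/(N+\varepsilon)$ to secure the local Lipschitz property, whereas the paper simply asserts that the right-hand side is locally Lipschitz, which silently presumes $N$ stays bounded away from zero. The one step where your argument (like the paper's) is thin is the concluding bound $S+E+I_a+I_s+U+R\le N_0$. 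Your parabolic comparison controls only $S+E+I_a$, because the diffusion acts on just three of the six compartments, so no maximum principle applies to the full sum $N$; the Gronwall estimate on the ODE block yields boundedness on compact time intervals --- which is all you need for continuation and global existence --- but it does not deliver the pointwise inequality $N\le N_0$ as stated. Indeed, since $R'=\gamma(I_a+I_s+U)\ge 0$, at a point $\x_0$ where initially $R_0(\x_0)=N_0(\x_0)$ and the mobile compartments vanish but $I_{a,0}>0$ nearby, diffusion makes $S+E+I_a$ positive at $\x_0$ for small $t>0$ while $R$ does not decrease, so the local sum exceeds $N_0(\x_0)$. The paper's one-line appeal to ``the maximum principle'' has exactly the same gap, so your proposal is no weaker than the printed proof; it would be worth flagging that the last assertion of the theorem really holds only in an integrated (total mass) sense, via $\frac{d}{dt}\int_\Omega N\,d\x\le 0$ under the Neumann boundary condition.
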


\begin{proof} Let $Y = (S, E, I_{a}, I_{s}, U, R)$.
Thanks to the comparison principle and according to the Duhamel formulation, we look for a time $T>0$ such that the
map
\begin{eqnarray*}
  \Phi(Y) &:=& \left(
\G_d * S_0 - \int_0^t \G_d * \left( \omega \left( \beta_e E + \beta_s I_s + \beta_a I_a \right) \frac{S}{N} \right) ds, \right.
\\
&& 
\K_e * E_0 + \int_0^t \K_e * \left(  \omega \left( \beta_e E + \beta_s I_s + \beta_a I_a \right) \frac{S}{N}  \right)ds,
\\
&& 
\K_s * I_{a,0} +  (1-p) \delta \int_0^t \K_s *  E ds,
\\ 
&&
\e^{-(\gamma+\mu+\nu)t} I_0 + p \delta \int_0^t \e^{-(\gamma+\mu)(t-s)}  E ds,
\\ 
&&
\e^{-(\gamma+\mu)t} U_0 + \nu \int_0^t \e^{-(\gamma+\mu)(t-s)}  I_s ds,
\\ 
&&
\left. R_0 + \gamma \int_0^t  I_s +  I_a + U ds \right)
\end{eqnarray*}
is a contraction mapping from the closed ball
$$
\left\{ Y = (S, E, I_{a}, I_{s}, U, R) \in L^\infty(\R_+,L^\infty(\Omega))^6; \sup_{t\in [0,T]} || Y(t,.) - Y_0||_{L^\infty(\Omega)} < +\infty \right\}
$$
onto itself.
Here $\G_d, \K_e,$ and  $\K_s$ are the kernels of the respective operators 
$\partial_t - d\Delta$, $\partial_t - d\Delta + \delta $, and $\partial_t - d\Delta + \gamma$ for $d=d_0$ or $d_1$.
According to \cite{ouh05},  there exists a constant $C_\Omega>0$ depending only on $\Omega$ such that
the kernels satisfy
$$
||\G_d(t,.)||_{L^1(\Omega)} \leq C_\Omega, ||\K_e(t,.)||_{L^1(\Omega)} \leq C_\Omega,
||\K_s(t,.)||_{L^1(\Omega)} \leq C_\Omega.
$$
Combining with the fact that the integral terms of the right-hand-side are locally Lipschitz,
choosing $T\ll \frac{1}{C_\Omega}$ allows to apply Picard's fixed point theorem \citep{hen81}.

If $f=(f_1, \dots,f_6)$ denotes the right-hand-side of the system \eqref{e1}, and
$Y=(S, E, I_{a}, I_{s}, U, R)$, since $f_i(Y_i=0)\geq 0$, 
  we deduce that the solution is nonnegative if the initial datum is nonnegative.
 Finally, maximum principle provides boundedness of solution.
\end{proof}

We give a condition on parameters such that the disease has an exponential initial growth.

\begin{thm}
  Let $(S_0, E_0, I_{a,0}, I_{s,0}, 0,0)$ be a nonnegative initial datum.
If the basic reproduction number
$$
\mathcal{R}_0 := \omega_0 \left( \frac{ \beta_e  }{\delta} + \frac{ (1-p) \beta_a  }{ \gamma } + \frac{ p \beta_s  }{ \gamma + \mu + \nu }  \right) \frac{S_0}{N_0} > 1,
$$
then $(E,I_a,I_s)$ exponentially grows.
\end{thm}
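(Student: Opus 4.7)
The plan is to linearize the system \eqref{e1} around the disease-free equilibrium (DFE) $(S_0, 0, 0, 0, 0, 0)$ and exhibit a mode that grows exponentially. Since initially $E, I_a, I_s, U, R$ are small and $S/N \approx S_0/N_0$, the equations for $(U, R)$ do not feed back into $(E, I_a, I_s)$, so it suffices to analyze the reduced linear subsystem for the infected compartments. Also, the time-dependent coefficients reduce to $\omega(0)=\omega_0$ and $d(0)=d_0$ on the initial interval $t\le t_{bol}$, so we work with constant coefficients.

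Next, I would decompose $E$, $I_a$ (and trivially $I_s$) along the orthonormal basis of Neumann eigenfunctions of $-\Delta$ on $\Omega$, with eigenvalues $0=\lambda_0<\lambda_1\le\lambda_2\le\dots$. For each mode $k$ the linearization becomes a $3\times 3$ constant-coefficient ODE system whose Jacobian is $J_k = F - V - d_0\lambda_k\,\mathrm{diag}(1,1,0)$, where
\begin{equation*}
F = \omega_0\frac{S_0}{N_0}\begin{pmatrix} \beta_e & \beta_a & \beta_s \\ 0 & 0 & 0 \\ 0 & 0 & 0 \end{pmatrix},\qquad
V = \begin{pmatrix} \delta & 0 & 0 \\ -(1-p)\delta & \gamma & 0 \\ -p\delta & 0 & \gamma+\mu+\nu \end{pmatrix}.
\end{equation*}
The dominant mode (least damped by diffusion) is $k=0$, i.e.\ spatially constant perturbations, for which the Jacobian is exactly $F-V$. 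Thus it suffices to show that $F-V$ has a positive eigenvalue whenever $\mathcal{R}_0>1$.

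I would then invoke the next-generation matrix method (van den Driessche–Watmough). Since $V$ is lower triangular with positive diagonal it is an $M$-matrix and
\begin{equation*}
V^{-1} = \begin{pmatrix} 1/\delta & 0 & 0 \\ (1-p)/\gamma & 1/\gamma & 0 \\ p/(\gamma+\mu+\nu) & 0 & 1/(\gamma+\mu+\nu) \end{pmatrix}.
\end{equation*}
Because $F$ has rank one, $FV^{-1}$ also has rank one, and its spectral radius equals its trace, which is precisely
\begin{equation*}
\omega_0\Bigl(\frac{\beta_e}{\delta}+\frac{(1-p)\beta_a}{\gamma}+\frac{p\beta_s}{\gamma+\mu+\nu}\Bigr)\frac{S_0}{N_0}=\mathcal{R}_0.
\end{equation*}
By the standard result, the largest real eigenvalue $s(F-V)$ of $F-V$ and $\rho(FV^{-1})-1$ have the same sign, so $\mathcal{R}_0>1$ implies $s(F-V)>0$.

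Finally, projecting the linearized equation onto the eigenvector of $F-V$ associated with $s(F-V)$ produces a component of $(E,I_a,I_s)$ growing like $e^{s(F-V)\,t}$; non-negativity of the initial perturbation (guaranteed by Theorem~\ref{thm1}) together with the Perron–Frobenius structure of the cooperative matrix $F-V$ ensures this component is non-trivial, hence $(E,I_a,I_s)$ grows exponentially. The main subtlety I expect is the Perron–Frobenius/positivity step: one must check that the principal eigenvector of $F-V$ has strictly positive entries, so that generic nonnegative initial data have a non-zero projection on it; this follows since $F-V$ is essentially non-negative (Metzler) and irreducible whenever $p\in(0,1)$ and at least one of $\beta_e,\beta_a,\beta_s$ is positive.
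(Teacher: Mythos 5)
Your proposal is correct, and its skeleton — linearize around the disease-free state, expand in Neumann eigenfunctions of $-\Delta$, and show the $k=0$ mode matrix has a positive eigenvalue — matches the paper's. Where you diverge is in the final, decisive step. The paper computes the characteristic polynomial $P(x)=x^3+a_2x^2+a_1x+a_0$ of the $3\times3$ mode matrix explicitly and observes that $a_0=(\gamma+\mu)(d_0k^2+\gamma)(d_0k^2+\delta)(1-\mathcal{R}_k)$, so $\mathcal{R}_k>1$ forces $P(0)<0$ and hence a positive real root; this is elementary and self-contained, and it yields as a bonus the mode-dependent thresholds $\mathcal{R}_k$, showing which spatial frequencies can destabilize. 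You instead invoke the van den Driessche--Watmough equivalence between $\sgn\bigl(s(F-V)\bigr)$ and $\sgn\bigl(\rho(FV^{-1})-1\bigr)$, after verifying that $V$ is a nonsingular M-matrix and that the rank-one matrix $FV^{-1}$ has spectral radius equal to its trace $\mathcal{R}_0$. That is a legitimate alternative (the paper itself records the next-generation computation, but only as a remark, not as the proof), at the cost of importing a black-box lemma where the paper uses a two-line sign argument. Your closing discussion of the Perron--Frobenius positivity of the principal eigenvector, needed to guarantee a nontrivial projection of the initial data onto the growing mode, is actually more careful than the paper, which asserts the exponential growth directly from the existence of a positive eigenvalue. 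One caveat applies equally to both arguments: the eigenfunction decomposition tacitly treats $S_0/N_0$ as spatially constant, which neither you nor the paper justifies.
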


This number has an epidemiological meaning. The term $ \frac{   \beta_e  }{ \delta  }$ represents the transmission rate by exposed during the average latency period  $1/\delta$. The term $\frac{  (1- p) \beta_a  }{ \gamma }$ is the transmission rate by asymptomatic during the average infection period $1/\gamma$, and the last one is the part of symptomatic.

\begin{proof}
A linearization around $(S_0, E_0, I_{a,0}, I_{s,0},0,0)$ is written as
the linear system of differential equations
\[
  \begin{pmatrix}
    E \\ I_a \\ I_s
  \end{pmatrix}'(t) =   \begin{pmatrix}   d_0\Delta + \omega_0\beta_e\frac{S_0}{N_0} -\delta   &  \omega \beta_a\frac{S_0}{N_0} &   \omega\beta_s\frac{S_0}{N_0}
            \\ (1-p) \delta & d_0\Delta -\gamma & 0
            \\ p \delta & 0 & -(\gamma+\mu+\nu) 
     \end{pmatrix}   \begin{pmatrix}
    E \\ I_a \\ I_s
  \end{pmatrix}. 
\]
Let $(v_k)_{k\geq 1}$ be an orthonormal basis of eigenfunctions of the Laplace operator with the homogeneous Neumann boundary condition, {\it i.e.} 
$-\Delta v_k = k^2 v_k$. Therefore, the characteristic polynomial of the matrix 
$$ \begin{pmatrix}   -d_0k^2 + \omega_0\beta_e\frac{S_0}{N_0} -\delta  & \omega  \beta_a\frac{S_0}{N_0}  &  \omega \beta_s\frac{S_0}{N_0}
            \\ (1-p) \delta & -d_0k^2 -\gamma & 0
            \\ p \delta & 0 & -(\gamma+\mu+\nu) 
     \end{pmatrix},$$
is $P(x) = x^3 + a_2 x^2 + a_1 x + a_0$,  
with $a_0 =  (\gamma+\mu)(d_0k^2 + \gamma) (d_0k^2+\delta)  \left( 1- \mathcal{R}_k \right)$ and 
$$
\mathcal{R}_k := \omega_0 \left( \frac{ \beta_e  }{d_0k^2+\delta} + \frac{(1- p) \beta_a  }{d_0k^2 + \gamma } + \frac{ p \beta_s  }{ \gamma+ \mu+\nu }  \right) \frac{S_0}{N_0}.
$$
If $\mathcal{R}_k>1$, there is at least one positive eigenvalue that coincides with an initial exponential growth rate of solutions.
\end{proof}
To reflect the spatiotemporal dynamic of the disease, we consider the effective reproduction number 
$$
\mathcal{R}_{\rm eff}(\x,t) := \omega(t) \left( \frac{ \beta_e  }{\delta} + \frac{ (1-p) \beta_a  }{ \gamma } + \frac{ p \beta_s  }{ \gamma + \mu + \nu }  \right) \frac{S(\x,t)}{N(\x,t)},
$$
and 
its mean with respect to the domain $\Omega$
$$
\overline{\mathcal{R}_{\rm eff}}(t) := \frac{1}{\mathcal{A}(\Omega)}\int_\Omega \mathcal{R}_{\rm eff}(\x,t) d\x.
$$
The value of $\mathcal{R}_0$ is computed in Table A of Figure \ref{fig1}.

We now establish the asymptotic behavior of solution.
\begin{thm} With the same assumptions as Theorem \ref{thm1}. Suppose moreover $\omega_0\beta_e \leq \delta$. Then the solution converges almost everywhere to the Disease Free Equilibrium $(S^*, 0, 0, 0, R^*)$ with $S^*+R^*=N^*$.
\end{thm}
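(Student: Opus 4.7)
My plan is to derive integral dissipation estimates from \eqref{e1} that force the infection compartments to vanish, and then to identify the limits of $S$ and $R$ by monotonicity. I first integrate each equation over $\Omega$; the homogeneous Neumann boundary conditions kill the Laplacian contributions. The $S$ equation gives $\tfrac{d}{dt}\int_\Omega S\,d\x\leq 0$, so $\int_\Omega S\,d\x$ is nonincreasing and converges. Summing the six equations yields $\tfrac{d}{dt}\int_\Omega N\,d\x=-\mu\int_\Omega(I_s+U)\,d\x\leq 0$, so $\int_\Omega N\,d\x$ converges. The $R$ equation gives $\tfrac{d}{dt}\int_\Omega R\,d\x=\gamma\int_\Omega(I_a+I_s+U)\,d\x\geq 0$, bounded above by $\int_\Omega N_0\,d\x$, so $\int_\Omega R\,d\x$ converges and
\[
\int_0^{+\infty}\!\int_\Omega(I_a+I_s+U)\,d\x\,dt<+\infty.
\]

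Adding the $S$ and $E$ equations cancels the nonlinear reaction term, yielding $\tfrac{d}{dt}\int_\Omega(S+E)\,d\x=-\delta\int_\Omega E\,d\x$, hence $\int_0^{+\infty}\int_\Omega E\,d\x\,dt\leq\delta^{-1}\int_\Omega N_0\,d\x<+\infty$. Using $S/N\leq 1$ in the $E$ equation and integrating over $\Omega$,
\[
\tfrac{d}{dt}\int_\Omega E\,d\x\leq(\omega_0\beta_e-\delta)\int_\Omega E\,d\x+\omega_0\beta_s\!\int_\Omega I_s\,d\x+\omega_0\beta_a\!\int_\Omega I_a\,d\x.
\]
The assumption $\omega_0\beta_e\leq\delta$ makes the homogeneous coefficient nonpositive; integrating the $I_a$, $I_s$, $U$ equations over $\Omega$ and combining the $L^1_t$-bound on $\int_\Omega E\,d\x$ with a Barbalat-type argument (using the uniform bounds of Theorem \ref{thm1} to get Lipschitz-in-time control of each integrated quantity) gives $\int_\Omega I_a(t)\,d\x,\int_\Omega I_s(t)\,d\x,\int_\Omega U(t)\,d\x\to 0$, and a Duhamel/Gronwall step on the displayed inequality then delivers $\int_\Omega E(t)\,d\x\to 0$.

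To upgrade these integral statements to a.e.\ convergence, I use the Duhamel representations already exploited in the proof of Theorem \ref{thm1}. For the ODE compartments, e.g.\ $I_s(\x,t)=\e^{-(\gamma+\mu+\nu)t}I_{s,0}(\x)+p\delta\int_0^t\e^{-(\gamma+\mu+\nu)(t-s)}E(\x,s)\,ds$, Fubini gives $E(\x,\cdot)\in L^1(\R_+)$ for a.e.\ $\x$, and therefore $I_s(\x,t)\to 0$ a.e.; the same argument works for $U$. For the diffusive compartments $E$ and $I_a$, parabolic regularity of the bounded solution combined with the $L^1_{x,t}$ bound yields, via a pointwise Barbalat-type argument, a.e.\ pointwise decay. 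Once $E,I_a,I_s,U\to 0$ a.e., the $S$ equation has vanishing forcing and the monotonicity of $t\mapsto\int_\Omega S\,d\x$ combined with the convolution with $\G_{d_1}$ gives $S(\x,t)\to S^*(\x)$ a.e.; the $R$-ODE then yields $R(\x,t)\to R^*(\x)$ a.e., and passing to the limit in $N=S+E+I_a+I_s+U+R$ gives $S^*+R^*=N^*$.

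The main obstacle is the last step, upgrading from integral to pointwise a.e.\ decay for the diffusive compartments $E$ and $I_a$: this requires additional parabolic regularity (e.g., $L^\infty_t H^1_x$ or uniform H\"older bounds in $t$) to rule out thin spikes and justify a Barbalat-type conclusion fiberwise in $\x$. The ODE compartments, by contrast, inherit pointwise decay directly from the Duhamel formula and Fubini.
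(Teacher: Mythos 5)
Your route is genuinely different from the paper's. The paper argues pointwise in $\x$: since $t\mapsto R(\x,t)$ is nondecreasing and bounded by $N_0(\x)$, it converges, and integrating the $R$-equation in time gives $\int_0^{+\infty}(I_a+I_s+U)(\x,s)\,ds<+\infty$ for a.e.\ $\x$, from which the paper infers $I_a,I_s,U\to 0$ a.e.; it then eliminates $E$ by an $L^2$ energy estimate (multiply the $E$-equation by $E$, integrate over $\Omega$, apply Young and Poincar\'e), where $\omega_0\beta_e\leq\delta$ makes the quadratic coefficient nonpositive. You instead work with spatial averages: the mass-balance identities $\frac{d}{dt}\int_\Omega(S+E)\,d\x=-\delta\int_\Omega E\,d\x$ and $\frac{d}{dt}\int_\Omega R\,d\x=\gamma\int_\Omega(I_a+I_s+U)\,d\x$ give $L^1_t$ control of all infected compartments, and Barbalat (legitimate here, since the $L^\infty$ bounds of Theorem \ref{thm1} make the integrated quantities Lipschitz in time) yields decay of the spatial integrals. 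Your $L^1$ bookkeeping is more elementary, avoids the Poincar\'e step (which, for Neumann conditions, the paper applies without the zero-mean normalization it requires), and for the ODE compartments $I_s$ and $U$ your Duhamel-plus-Fubini upgrade to a.e.\ convergence is more careful than the paper's bare assertion that time-integrability plus positivity implies pointwise decay.

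The step you leave open --- upgrading $L^1_x$ decay of $E$ and $I_a$ to a.e.\ decay --- is a real gap as written, but it closes with the tools already in the proof of Theorem \ref{thm1} rather than with extra parabolic regularity. Use the Duhamel formulas together with the $L^1\to L^\infty$ smoothing of the Neumann heat semigroup, $\|\G_d(t)*f\|_{L^\infty(\Omega)}\leq C\min\left(\|f\|_{L^\infty(\Omega)},\,t^{-1}\|f\|_{L^1(\Omega)}\right)$ in dimension two. For $E$, the forcing $F=\omega(\beta_eE+\beta_sI_s+\beta_aI_a)S/N$ is bounded in $L^\infty$ and satisfies $\|F(s)\|_{L^1(\Omega)}\to 0$ by your integrated estimates; splitting $\int_0^t\K_e(t-s)*F(s)\,ds$ into the ranges $s\leq t/2$, $t/2\leq s\leq t-\tau$ and $t-\tau\leq s\leq t$ gives $\limsup_{t\to+\infty}\|E(t)\|_{L^\infty(\Omega)}\leq C\tau$ for every $\tau>0$, hence uniform (so a.e.) decay of $E$, and then of $I_a$ by the same formula. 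Note also that neither you nor the paper actually proves pointwise convergence of $S$ itself ($S$ is not monotone in $t$ pointwise because of the diffusion term); this deserves a line, e.g.\ via the Duhamel formula for $S$ once the forcing has been shown to decay in $L^\infty$.
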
 

\begin{proof}
  From the last differential equation in system \eqref{e1}, we deduce that
$R$ is an increasing function bounded by $N(0)$. Thus $R(t)$ converges to $R^*$ a.e. as $t$ goes to $+\infty$.
Then integrating over time this equation provides
$$
    R(\x,t) - R(\x,0) = \gamma \int_0^t  I_a(\x,s) +  I_s(\x,s)  +  U(\x,s)  \, ds
$$
and
$$
    R^*(\x) - R_0(\x) =  \gamma\int_0^{+\infty} I_a(\x,s) +  I_s(\x,s)  +  U(\x,s) \, ds, 
$$
which is finite. Furthermore, $I_s, I_a, U$ also go to $0$ a.e. as  $t\to +\infty$
thanks to the positivity of the solution.
Multiplying the second equation by $E$ and integrating over $\Omega$ give
\[
  \frac{1}{2}\frac{d}{dt} \int_\Omega E^2 (\x,t) d\x + d \int_\Omega (\nabla E)^2 (\x,t) d\x = 
  \int_\Omega  \omega\left( \beta_s I_s + \beta_a I_a \right) \frac{S}{N} E + ( \omega\beta_e  \frac{S}{N} - \delta) E^2.
\]
Since $0\leq \frac{S}{N}\leq 1$ and $\omega\leq \omega_0$, Young's inequality followed by Poincar\'e inequality provide
\[
  \frac{1}{2}\frac{d}{dt} \int_\Omega E^2 (\x,t) d\x + d C_\Omega \int_\Omega E^2 (\x,t) d\x \leq 
  \int_\Omega \frac{\omega_0^2\beta_s^2}{2\varepsilon} I_s^2 + \frac{\omega_0^2\beta_a^2}{2\varepsilon} I_a^2
  + (\frac{\varepsilon}{2} + \omega_0 \beta_e   - \delta) E^2.
\]
Since $I_s$ and $I_a$ go to $0$ when $t\to +\infty$, it is enough to choose $\varepsilon>0$ such that
$dC_\Omega + \delta - \omega_0 \beta_e - \frac{\varepsilon}{2} > 0$, to conclude that $E \to 0$ a.e.
\end{proof} 

\begin{rem}
The basic reproduction number $\mathcal{R}_0$ can be computed thanks to the next generation matrix of the model without diffusion as in \cite{vdd2000}.
Since the infected individuals are in $E, I_a$ and $I_s$, 
 new infections ($\mathcal{F}$)  and  transitions between compartments
($\mathcal{V}$) can be rewritten  as 
$$\mathcal{F} =\begin{pmatrix} 
  \omega ( \beta_e E + \beta_s I_s + \beta_a I_a ) \frac{S}{N} \\
0 \\ 
0 
 \end{pmatrix}, 
\; \mathcal{V} = \begin{pmatrix} 
\delta_e  E \\
 \gamma I_a - (1-p) \delta E\\
(\gamma + \mu + \nu) I_s - p \delta E 
\end{pmatrix}.
$$
Thus, $\mathcal{R}_0 = \rho(FV^{-1})$ of the next generation matrix
  \begin{align*}
 FV^{-1} &= \begin{pmatrix} 
 \frac{  \omega \beta_e S_0 }{ \delta  N_0 } + \frac{\omega (1-p)  \beta_a S_0 }{ \gamma N_0} + \frac{\omega p   \beta_s S_0 }{ (\gamma + \mu+\nu) N_0 } &  \frac{ \omega \beta_a S_0 }{ \gamma N_0}  &  \frac{\omega   \beta_s S_0 }{ (\gamma + \mu +\nu ) N_0 } \\
  0 & 0 & 0 \\
  0 & 0 & 0
 \end{pmatrix}.
\end{align*}
\end{rem}

\subsection{Model resolution}

No treatment is applied, then $\nu=0$ 
Calibration of the model is done from January 24, 2020, the day of first confirmed infection, to April 30, {\it i.e.} 97 days.
Since $\mathcal{R}_0=3.425257$ and $1-\frac{1}{\mathcal{R}_0}=0.708051$,  $70\%$ of the population is set as susceptible to the infection due to the
herd immunity.  
The objective function $J$ is computed to provide a relative error of order less than $10^{-2}$.
In Figure \ref{fig1}, Table A shows estimated parameters. Remark that $\omega_0\beta_e \leq \delta$.
The rest of the Figure compares the data and the fitted total symptomatic infected and death of the posterior distribution.

%\newpage

\begin{figure}[htbp]
\centering
  \includegraphics[scale=.42]{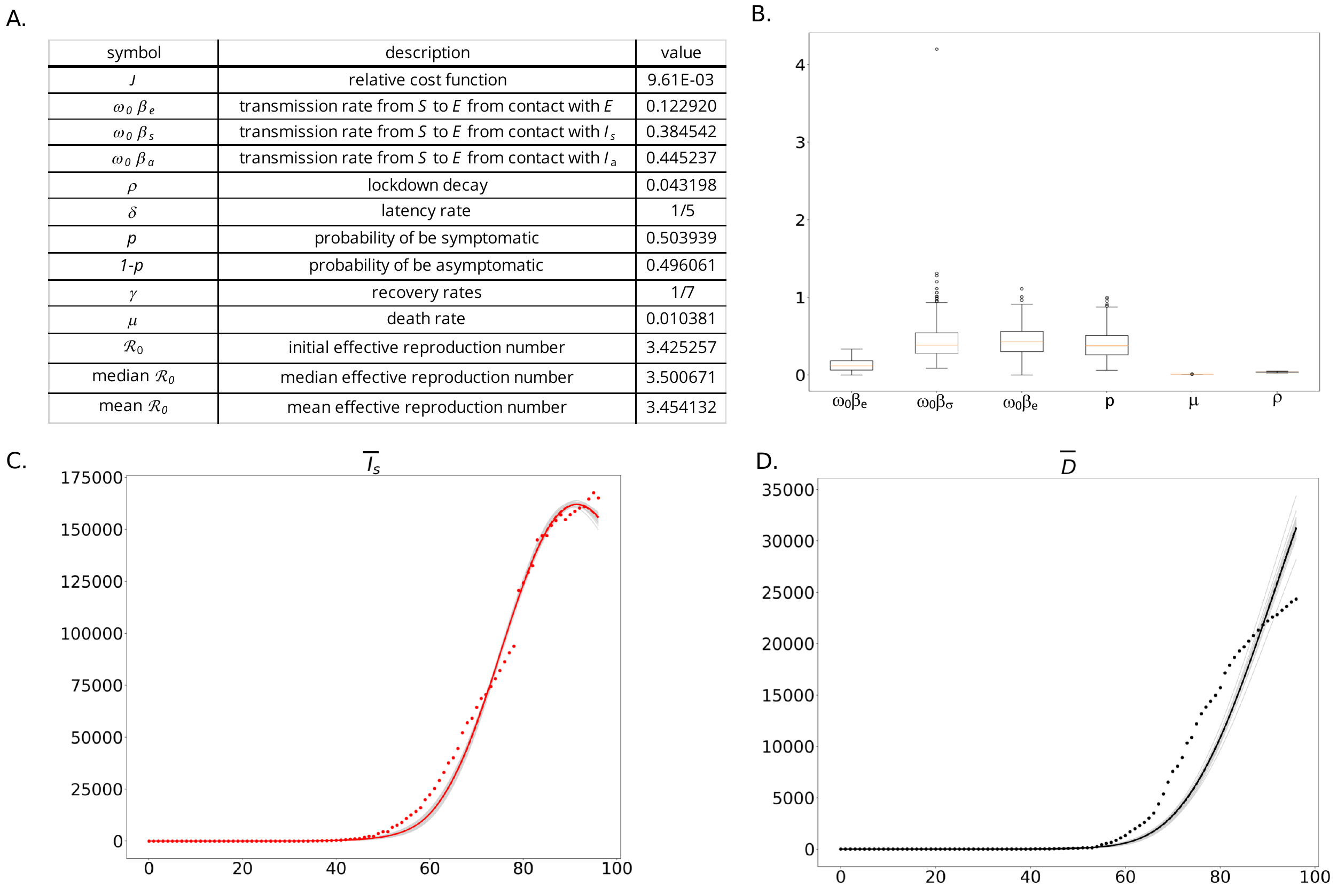}
\\[-0.5cm]\caption{A. Parameters calibrated according to data  from Sant\'e Publique France. B. Boxplot of the posterior distribution computed from these data. C. Fitted total symptomatic infected of the posterior distribution in grey , median in red straight line, mean is dotted line. D. Fitted total death  of the posterior distribution in grey, median in black straight line, mean is dotted line.}
\label{fig1}
\end{figure}
%~
\newpage

\subsection{Spatial spread of covid-19}

Simulations are performed from  January 24, 2020, the day of first confirmed infection, to June 16.
Images are $957\times 984$ pixels, time step is chosen to verify the stability condition preserving positivity,
and $\eta = 0.6$.

Figure \ref{fig2}-A presents the day before lockdown, the disease is mainly located in regions North (Hauts-de-France, \^Ile-de-France), ,  East (Alsace), Center-East (Rh\^one) and South-East (Bouches-du-Rh\^one).
When lockdown ends on May 10, Figure \ref{fig2}-B1 shows that  the disease remains located into these regions,
 except for the South where it almost vanishes. Provided that distancing is predominantly respected (here $\eta = 0.6$), the density of symptomatic infecteds is decreasing until June 16 everywhere (Figure \ref{fig2}-B2).

Figures \ref{fig2}-C represents what would happen if nothing has been done. 
On May 10, the disease is located in the same regions, but with a greater number of infected.
As a consequence, the spread continues from East to West. While the disease vanishes in the Eastern part of the country 
on June 16, the West is peaking.
It is important to see that, without intervention, the total number of infected is high and the whole country is deeply affected.

\begin{landscape}
\begin{figure}[htbp]
\centering
 \includegraphics[scale=.6]{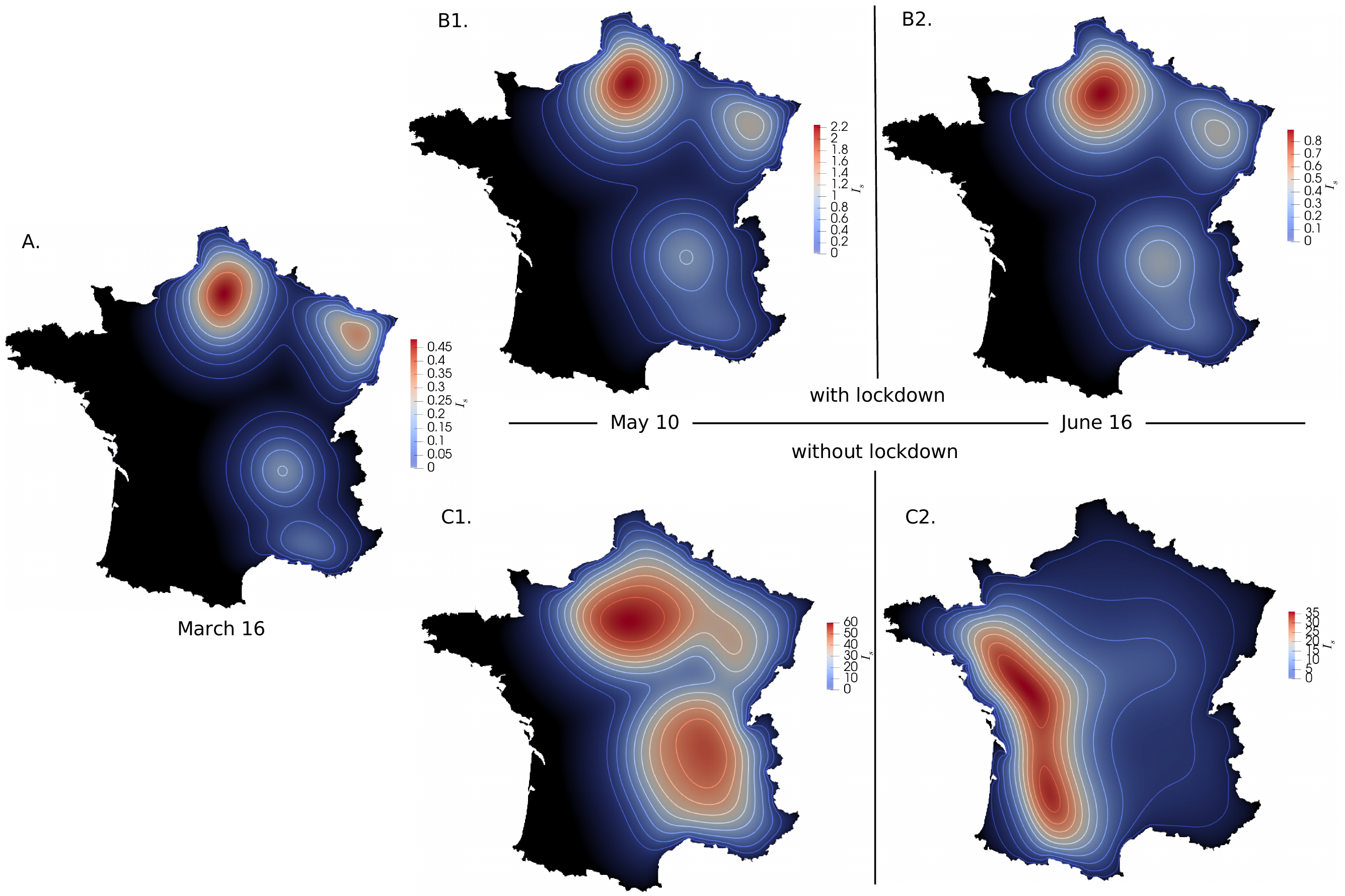}
 \\[-0.5cm]\caption{A. Spatial distribution of symptomatic infected density on March 16, one day before lockdown. 
 B. Spatial distribution of symptomatic infected density on May 10 (B1), one day before unlockdown, and on June 16 (B2). 
 C. Spatial distribution of symptomatic infected density on May 10 (B1) and on June 16 (B2) assuming that no lockdown has been done. }
\label{fig2}
\end{figure}

\end{landscape}

\subsection{Strategies for unlockdown}

A naive strategy is to unlock the regions where $\mathcal{R}_{\rm eff}(\x,.)$ is the lowest.
Looking at the maps in Figures \ref{fig3}, we see that the effective reproduction number before the lockdown 
 is between $3.34$ and $3.42$ (Figures \ref{fig3}-A). On May 10, it goes down and is between $0.38$ and $0.42$ 
 (Figures \ref{fig3}-B).
After May 10, with a distancing mostly respected ($\eta = 0.6$), $\mathcal{R}_{\rm eff}(\x,.)$ increases again.
It exceeds $1$ with lower value ($1.04$) in the regions that have been most affected upstream (Figures \ref{fig3}-C).

%\begin{landscape}
\begin{figure}[htbp]
\centering
 \includegraphics[scale=.48]{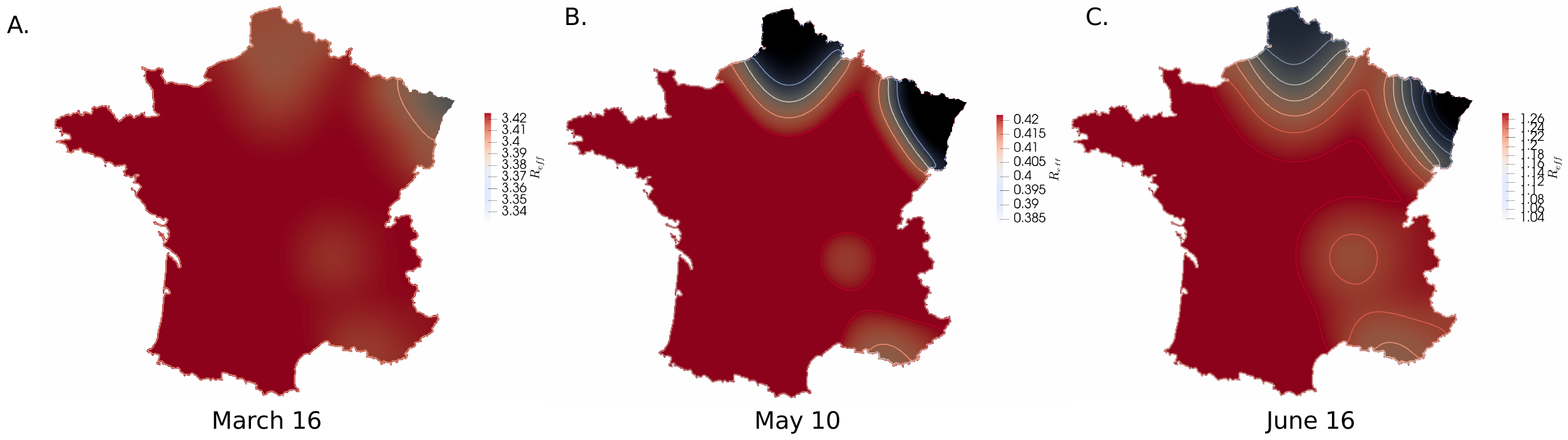}
 \\[-0.5cm]\caption{Spatial distribution of effective reproduction number one day before lockdown March 16 (A),
  one day before unlockdown May 10 (B), and on June 16 (C). }
\label{fig3}
\end{figure}

%\end{landscape}

%\newpage

A less forced strategy is to believe in the respect of distancing after lockdown. With this set of parameters, 
$\overline{\mathcal{R}_{\rm eff}}$ goes to $1$ when $\eta = 0.63$.
As shown in Figure \ref{fig4}-A, provided that at least $63\%$ of the individuals respect the distancing rules,
the number of symptomatic infected individuals is controlled. Figure \ref{fig4}-B shows that in this case, the
effective reproduction number $\overline{\mathcal{R}_{\rm eff}}$ remains less than $1$. 
Below $63\%$ regarding distancing, the number of symptomatic infected increases again and 
$\overline{\mathcal{R}_{\rm eff}}$  becomes greater than $1$. 

A third strategy, more constraining for some, is to continue the lockdown in the most affected areas.
Here, lockdown is continued on the eastern half, while the western half is free.
Then the number of symptomatic infected and the effective reproduction number 
pursue their decay as shown in Figure \ref{fig4} (dotted  lines).
It is important to note that in the simulation no individuals move 
from a confined to an unconfined region.  A homogeneous Neumann condition, equal to $0$, is
imposed along this fictitious frontier.

%\begin{landscape}
\begin{figure}[htbp]
\centering
 \includegraphics[scale=.48]{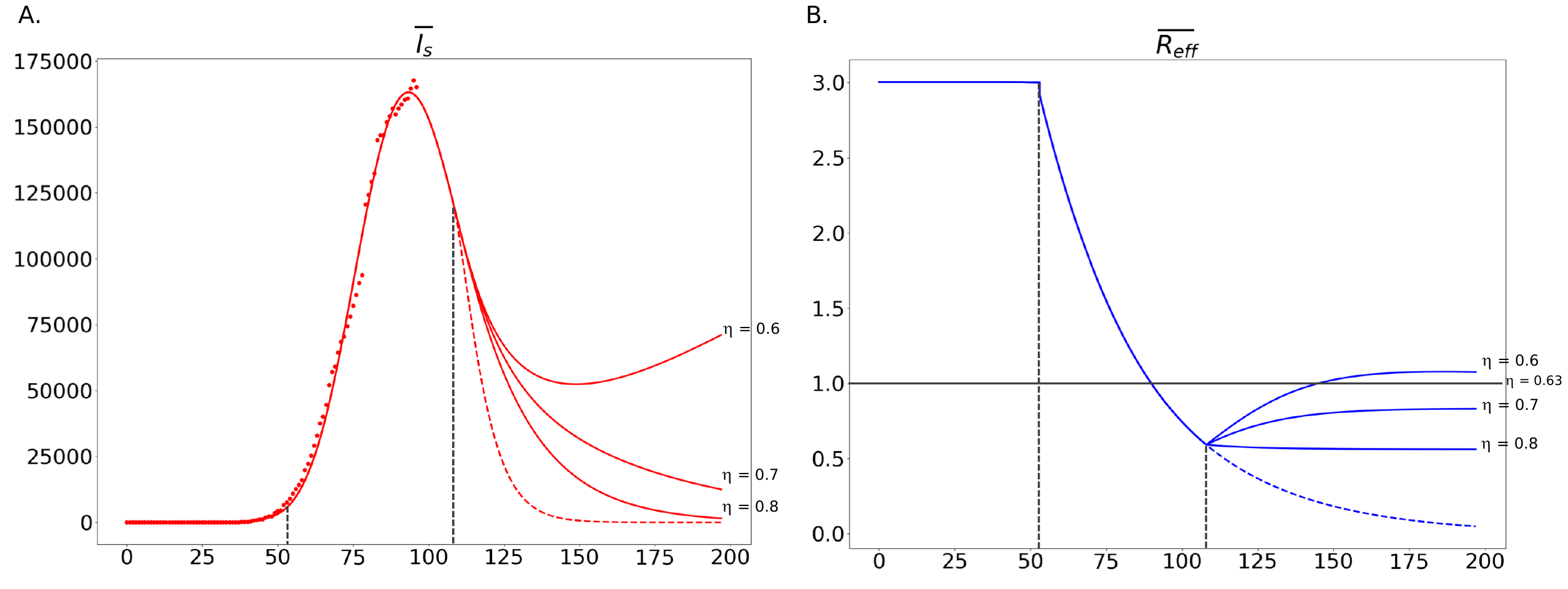}
 \\[-0.5cm]\caption{A. Total number of symptomatic infected with $\eta = 0.6,0.7,0.8$ (\textcolor{red}{-----}) and with lockdown on Eastern region (\textcolor{red}{- - -}). B. Mean effective reproduction number with $\eta = 0.6,0.7,0.8$ (\textcolor{blue}{-----}) and with lockdown on Eastern region (\textcolor{blue}{- - -}). Vertical lines represent the day of lockdown (March 17) and of unlockdown (May 11). Dots denote for the confirmed data.}
\label{fig4}
\end{figure}

%\end{landscape}

\newpage

\section{Discussion}

Lockdown has reduced both the number of infections and the spread.
The propagation focused in the Eastern half and kept the West intact.
Without lockdown, the whole country would be affected and more severely.
Lockdown caused a significant reduction of the reproduction number, from $3.42$ to $0.38$. 
Since the number of susceptible individuals remain large, as soon as contacts increase, the effective reproduction number
$\mathcal{R}_{\rm eff}$ grows and can exceed $1$.

With a  lack of treatment, social distancing remains the most effective means. We notice that it has to
be highly respected (here at over $63\%$). As shown in Figure \ref{fig4}-A, 
the number of symptomatic infected individuals, therefore potentially hospitalized, is restrained as soon as at least $63\%$ 
of the individuals respect the distancing,
Nevertheless, this constraint can be relaxed since it can be imposed only in the most infectious areas.

In summary, to obtain a possible unlockdown map, the local value of the effective reproduction number 
should be taken into account, as well as the number of infected individuals and the direction of the spread of the disease.

Of course, this is a simplified model based only the population density and mean daily commuting.
 For example, the model could be improved by considering a larger diffusion  
 along major axes of travel ({\it i.e.} $d=d(\x,t)$), by taking local effects of distancing ({\it e.g} $\eta = \eta(\x,t)$ where $\eta(\x,t)=0$ in closed schools),
 or by opening of the frontiers (by changing the boundary conditions and adding new recruits).

\noindent
{\bf Supplementary video} Spatiotemporal propagation of COVID-19 from March 16 to June 16
with lockdown occurring from March 17 to May 11 (left) and without intervention (right).

 % ------------------------------------------    
%\newpage
\bibliographystyle{abbrvnat}
{\small
\bibliography{references}
}

% ------------------------------------------  

\end{document}